\newtheorem{assumption}{Assumption}
\newtheorem{proposition}{Proposition}
\newtheorem{theorem}{Theorem}
\newtheorem{lemma}{Lemma}
\title{\LARGE \bf
Detection of Cyber-Attacks in Collaborative Intersection Control
}
\author{Twan Keijzer$^{1}$, Fabian Jarmolowitz$^{2}$, and Riccardo M.G. Ferrari$^{1}$
\thanks{$^{1}$Twan Keijzer and Riccardo M.G. Ferrari are with Delft Centre for Systems and Control,
        Delft University of Technology, 2628 CD Delft, The Netherlands
        {\tt\small \{t.keijzer,r.ferrari\}@tudelft.nl}}%
\thanks{$^{2}$Fabian Jarmolowitz is with Corporate Sector Research and Advanced Engineering, Robert-Bosch GmbH, 
				71272 Renningen, Germany
{\tt\small fabian.jarmolowitz@bosch.com}}%
}
\begin{document}

\maketitle
\thispagestyle{empty}
\pagestyle{empty}

\begin{abstract} 
Road intersections are widely recognized as a lead cause for accidents and traffic delays. In a future scenario with a significant adoption of Cooperative Autonomous Vehicles, solutions based on fully automatic, signage-less Intersection Control would become viable. Such a solution, however, requires communication between vehicles and, possibly, the infrastructure over wireless networks. This increases the attack surface available to a malicious actor, which could lead to dangerous situations. In this paper, we address the safety of Intersection Control algorithms, and design a Sliding-Mode-Observer based solution capable of detecting and estimating false data injection attacks affecting vehicles' communication. With respect to previous literature, a novel detection logic with improved detection performances is presented. Simulation results are provided to show the effectiveness of the proposed approach.
\end{abstract}

%%%%%%
%-------------------------------------------------------------
% Sections
%-------------------------------------------------------------
%%%%%%
\section{Introduction}
\label{sec:introduction}
\noindent Classical road intersections for human-driven vehicles are managed via fixed signage and traffic lights, which allows for a sub-optimal vehicle throughput while guaranteeing an adequate safety level. Still, it is well known that classical intersections are a major cause of accidents, due to human error, and traffic delays \cite{european2016traffic}. In the early 2000s, based on the projected introduction of \emph{Cooperative Autonomous Vehicles} (CAV), works such as \cite{dresner2005multiagent} proposed replacing classical intersections with safer, automated solutions. \emph{Intersection Control} (IC) would thus automatize the tasks of negotiating, planning and executing the trajectories of CAVs in order to increase the safety and vehicle throughput of the junction.

A key enabling technology of an IC solution are wireless networks, allowing real-time communication of measurements and control signals between CAVs, and to the road-side infrastructure. The security and robustness of such Vehicle-to-Everything (V2X) networks, is thus of paramount importance for the safety of the IC itself. As a first line of defence, current V2X protocols include encryption and authentication mechanisms to prevent intrusion. For instance, the Autosar standard with End-to-End Protection (E2E), following the ISO 26262 standard, is such a preventive protection measure and is analyzed in \cite{bondavalli_making_2014}.
The case of communication disruption or false data injection by a malicious attacker \cite{amoozadeh2015security,yan2016can} that can circumvent these protections would instead require a different approach. Indeed, a major difference between an ordinary fault and a smart attacker is that the latter actively seeks to cause great harm to the IC, while minimizing the possibility of being detected \cite{engoulou2014vanet,van2017analyzing}. This lead to development of increasingly sophisticated anomaly detection approaches for V2X communication traffic, as a second line of defence that can provide a guarantee against either inside attackers, or sophisticated attackers that successfully infiltrated the system. These techniques include plausibility checks based on elementary models of the CAVs and IC \cite{bissmeyer2012assessment,dietzel2012graph,yan2008providing}, as well as more advanced, model-based approaches that have been proposed in the literature for detecting attacks in general Cyber-Physical Systems (CPS), such as \cite{pasqualetti2013attack,quan2018distributed,Jahanshahi_2018aa,Keijzer2019,Keijzer_2021aa}. Nevertheless, these techniques were not yet applied to CAVs on an automated intersection.

In this paper, we will address the problem of designing a second line of defence for IC subjected to false-data injection attacks. To this end, a novel cyber-attack detection method is presented based on a \emph{Sliding Mode Observer} (SMO). In previous work by the authors \cite{Keijzer2019} a detection logic based on the so-called \emph{Equivalent Output Injection} (EOI) term of the SMO was presented. The novel approach proposed in the present paper no longer requires this EOI, leading to better detection performance. The proposed technique is verified in simulation of an IC scenario. Here a control approach called \emph{Virtual Platooning} (VP,  \cite{medina2017cooperative}) is used. However, also potentially better-performing  optimization-based approaches could be applied \cite{campos2014cooperative,kamal2014vehicle,katriniok2017distributed,kneissl2018feasible} without affecting the proposed detection method.

The structure of the paper is as follows: Section \ref{sec:problem_statement} introduces the problem addressed in this paper, while Sections \ref{sec:smo} and \ref{sec:detection}, respectively, introduce the SMO design and the detection thresholds on which the proposed attack detection strategy is built. Simulation results are shown in Section \ref{sec:sim}, while concluding remarks are finally drawn in Section \ref{sec:conclusion}.
\section{Problem Formulation}
\label{sec:problem_statement}
\noindent In this paper, detection of cyber-attacks on the inter-vehicle communication is considered in a collaborative intersection control scenario. Each car is modeled as
\begin{equation}\label{eq:car_model}
    \left\{
\begin{aligned}
\left[\begin{matrix}\dot{p}_i\\\dot{v}_i\\\dot{a}_i\end{matrix}\right]&=\underset{A_i}{\underbrace{\left[\begin{matrix}0&1&0\\0&0&1\\0&0&-\frac{1}{\tau_i}\end{matrix}\right]}}\underset{x_i}{\underbrace{\left[\begin{matrix}p_i\\v_i\\a_i\end{matrix}\right]}}+\underset{B_i}{\underbrace{\left[\begin{matrix}0\\0\\\frac{1}{\tau_i}\end{matrix}\right]}}u_i\,,\\
y_i &= \left[\begin{matrix}p_i-p_{i-1}-L_{i}\\v_i-v_{i-1}\\v_i\\a_i\end{matrix}\right]+\zeta_i\,,
 \end{aligned}\right.
\end{equation}
where the subscripts $i$ and $i-1$ denote the variables are related to cars $i$ and $i-1$ respectively. $p$, $v$, $a$, $u$, $y$, $\zeta$, $\tau$, and $L$ are, respectively, the distance from the rear of the vehicle to the intersection (negative when approaching the intersection), velocity, acceleration, input, measurements, sensor noise, engine time constant, and length of the cars.

The considered IC scenario is depicted in Figure \ref{fig:IC_scenario}. In this scenario, as can be seen from Equation \eqref{eq:car_model}, each car measures the relative distance to the intersection, and relative velocity from the preceding car, as well as its own velocity and acceleration. Mandatory for having these measurements for all cars entering the intersection are either a central infrastructure with appropriate sensors and V2X communication and/or cars equipped with lateral sensors, e.g. \cite{Bosch_sensor2021}. 
Furthermore, each car receives, via wireless V2V communication, the input of the preceding car.

The interaction between two cars in an IC scenario can be modeled, from the perspective of car $i$, as
\begin{equation}\label{eq:IC_model}
    \left\{
\begin{aligned}
\left[\begin{matrix}\dot{x}_{i-1}\\\dot{x}_i\end{matrix}\right] &= \underset{A}{\underbrace{\left[\begin{matrix}A_{i-1}&0\\0&A_i\end{matrix}\right]}}\underset{x}{\underbrace{\left[\begin{matrix}x_{i-1}\\x_i\end{matrix}\right]}}+\underset{B}{\underbrace{\left[\begin{matrix}B_{i-1}&0\\0&B_i\end{matrix}\right]}}\underset{u}{\underbrace{\left[\begin{matrix}u_{i-1}\\u_i\end{matrix}\right]}}\,,\\
 y_i &= C\underset{x}{\underbrace{\left[\begin{matrix}x_{i-1}\\x_i\end{matrix}\right]}}+\underset{c}{\underbrace{\left[\begin{matrix}-L_{i}\\0_{3\times 1}\end{matrix}\right]}}+\zeta_i\,.
 \end{aligned}\right.
\end{equation}
Here, $C$ can be derived from equation \eqref{eq:car_model}. Furthermore, one can see that the coupling between the vehicles appears in the measurement equation only. These measurements, as well as the communicated input $u_{i-1}$, can be used by car $i$ to calculate a control input $u_i$ such that the IC objective is achieved. In this work, which primarily deals with the cyber-attack detection, any control law for IC can be chosen without affecting the detection method.
\begin{figure}[h!]
	\centering
	\includegraphics[trim=0 0 0 0mm,clip, height=0.5\columnwidth]{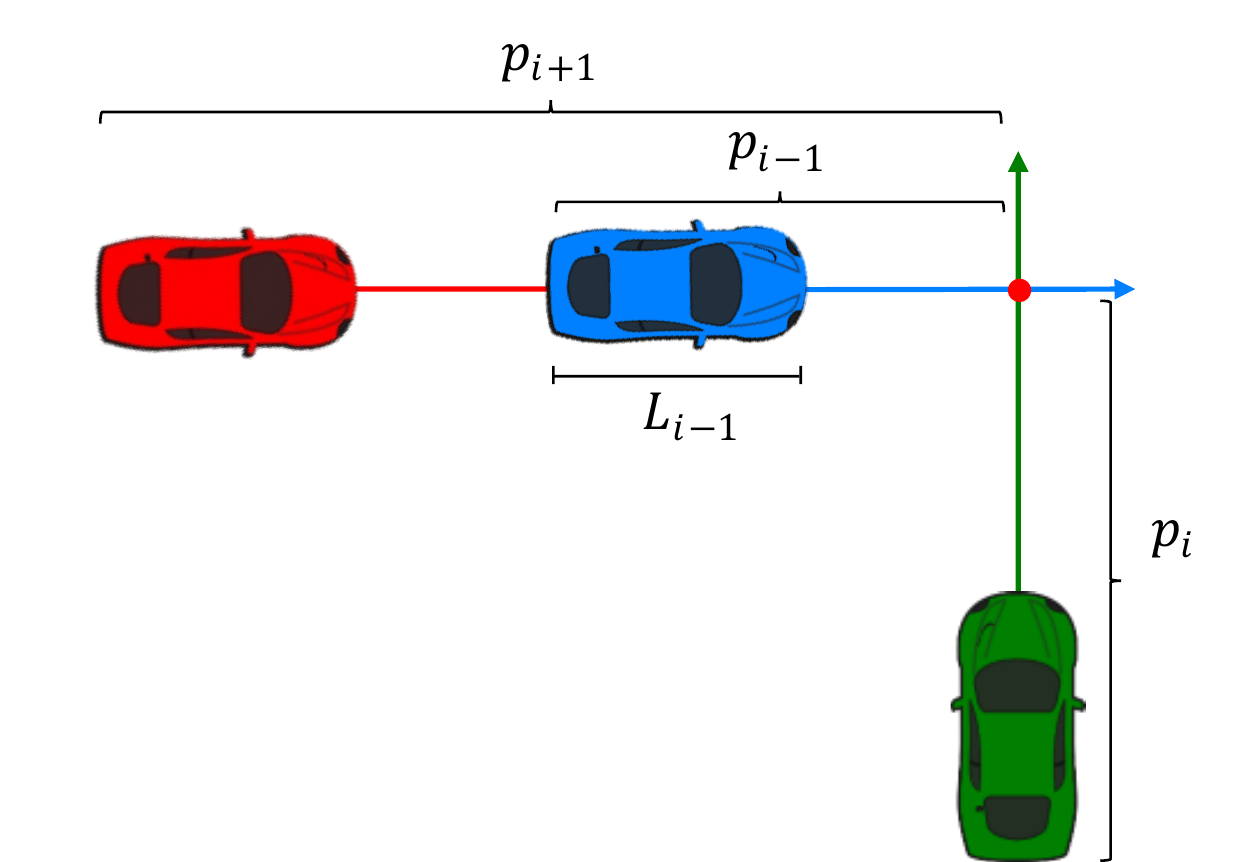}
	\caption{Intersection Control Scenario \cite{ElineThesis}}
	\label{fig:IC_scenario}
\end{figure}
\subsection{Model Uncertainty \& Cyber-attack}
\noindent In this section, System \eqref{eq:IC_model} will be rewritten to make the model uncertainty and cyber-attack explicit. To quantify the model uncertainty, the following assumption is made.

\begin{assumption}
Each car $i$ is assumed know its own dynamics, represented by $\tau_i$ and $L_i$.  It will, however, only have access to the nominal dynamics of the preceding car.
\end{assumption}

The nominal value of $\tau_{i-1}$, available to car $i$, is defined as
\begin{equation*}
    \hat{\tau}_{i-1} = r_\tau \tau_{i-1}\,.
\end{equation*}
Furthermore, the cyber-attack is defined as a \say{man in the middle} attack, such that each car $i$ will receive
\begin{equation*}\label{eq:attack}
    u_{i-1,r} = u_{i-1} + \Delta u_{i-1}\,,
\end{equation*}
where $\Delta u_{i-1}$ is the cyber-attack. To make the model uncertainty and cyber-attack explicit a model of the form
\begin{equation}\label{eq:IC_model_attacked}
        \left\{
\begin{aligned}
\dot{x}&=Ax+Bu+E\eta+F\Delta u_{i-1}\,,\\
 y_i &= Cx+c+\zeta_i\,,
 \end{aligned}\right.
\end{equation}
is proposed. Here $A$, $B$, and $u$ are redefined using
\begin{equation*}
    \tau_{i-1}\leftarrow\hat{\tau}_{i-1} \mbox{ and } u_{i-1} \leftarrow u_{i-1,r}\,,
\end{equation*}
such that they are known to car $i$, i.e. they can be used in its detection logic. Furthermore, $E$, $F$, and $\eta$ are defined as
\begin{equation*}
    E=\left[\begin{matrix}
0_{2\times 1}\\\frac{1}{\hat{\tau}_{i-1}}\\0_{3\times 1}
\end{matrix}\right]\,; F=\left[\begin{matrix}
0_{2\times 1}\\
-\frac{1}{\hat{\tau}_{i-1}}\\
0_{3\times 1}
\end{matrix}\right]\,; \eta=(r_\tau -1)(u_{i-1}-a_{i-1})\,,
\end{equation*}
where $E$ and $F$ are known to car $i$ and $\eta$ is unknown uncertainty. Note that Systems \eqref{eq:IC_model_attacked} and \eqref{eq:IC_model} are only reformulated.

The following assumptions are made on system \eqref{eq:IC_model_attacked}.
\begin{assumption}\label{ass:noise}
	The sensor noise $\zeta_i$ is zero-mean and bounded by a known value $\bar{\zeta}_i\geq|\zeta_i|$.
\end{assumption}
\begin{assumption}\label{ass:uncertainty}
	The uncertainty $\eta$, and cyber-attack $\Delta u_{i-1}$ are bounded by known values $\bar{\eta}\geq|\eta|$ and $\bar{\Delta}\geq|\Delta u_{i-1}|$.
\end{assumption}
These bounds are defined for the IC scenario in section \ref{sec:sim}.

\subsection{Model Transformation}
\noindent In System \eqref{eq:IC_model_attacked} the cyber-attack appears as an unknown input. This allows for the use of an SMO for estimation and detection of the cyber-attack \cite{Keijzer2019}. In order to implement this SMO based approach, the system is transformed to
\begin{equation}\label{eq:SMO_general_model}
    \left\{\begin{aligned}
	\dot{x}_1 &= A_{11} x_1 + A_{12} x_2 +B_1u + E_1\eta\ + F_1 \Delta u_{i-1}\\
	\dot{x}_2 &= A_{21} x_1 + A_{22} x_2 +B_2u + E_2\eta\ + F_2 \Delta u_{i-1}\\
	y_i&= x_2 + c + \zeta_i
	\end{aligned}\right.
\end{equation}
using a transformation introduced by \cite{Edwards2000-vs}. Here it is required that $A_{11}\prec0$ to ensure that the observer error dynamics are stable. The following assumption ensures $A_{11}\prec0$ \cite{Edwards2000-vs}.
\begin{assumption}\label{ass:zeros}
The invariant zeros of (A,F,C) lie in $\mathbb{C}_-$. 
\end{assumption}

\section{Sliding Mode Observer Design}
\label{sec:smo} 
\noindent Based on \cite{Keijzer2019}, the following SMO is introduced
\begin{equation}\label{eq:SMO}
    \left\{\begin{aligned}
\left[\begin{matrix}
\dot{\hat{x}}_1\\\dot{\hat{x}}_2
\end{matrix}\right]&=
\left[\begin{matrix}
A_{11}&A_{12}\\A_{21}&A_{22}
\end{matrix}\right]
\left[\begin{matrix}
\hat{x}_1\\\hat{x}_2
\end{matrix}\right]+
\left[\begin{matrix}
B_1\\B_2
\end{matrix}\right]
u-
\left[\begin{matrix}
A_{12}\\A_{22}^{-s}
\end{matrix}\right]
e_y+
\left[\begin{matrix}
0\\\nu
\end{matrix}\right]\\
\hat{y}_i&=\hat{x}_2+c\\
\nu&\triangleq-M \mbox{sgn}(e_y)
\end{aligned}\right.
\end{equation}
where $A_{22}^{-s}=A_{22}-A_{22}^s$, $A_{22}^s\prec0$ is the linear gain, the diagonal matrix $M\succ0$ is the switching gain, and $e_y\triangleq \hat{y}_i-y_i$.

Based on the system dynamics of equation \eqref{eq:SMO_general_model} and the SMO of equation \eqref{eq:SMO}, the observer error dynamics become
\begin{equation}\label{eq:SMO_err_dyn}
\resizebox{0.91\hsize}{!}{$
\left[\begin{matrix}
\dot{e}_1\\\dot{e}_2
\end{matrix}\right]=
\left[\begin{matrix}
A_{11}&0\\A_{21}&A_{22}^s
\end{matrix}\right]
\left[\begin{matrix}
e_1\\e_2
\end{matrix}\right]+
\left[\begin{matrix}
A_{12}\\A_{22}^{-s}
\end{matrix}\right]
\zeta_i
-\left[\begin{matrix}
E_1\\E_2
\end{matrix}\right]\eta-
\left[\begin{matrix}
F_1\\F_2
\end{matrix}\right]\Delta u_{i-1}+
\left[\begin{matrix}
0\\\nu
\end{matrix}\right].$}
\end{equation}
where $e_1 \triangleq \hat{x}_1-x_1$ and $e_2 \triangleq \hat{x}_2-x_2$. Furthermore, $e_y$ can be expressed in terms of $e_2$ as $e_y=e_2-\zeta_i$.

Lemma \ref{prop:bounds} presents bounds on $e_1$, $e_2$, and $\dot{e}_2$ in healthy and attacked conditions. These bounds are used in the detection logic design and detectability analysis in section \ref{sec:detection}.

\begin{lemma}\label{prop:bounds}
Define $\underline{e}_1\leq e_1 \leq \bar{e}_1$, $\max(\lvert\underline{e}_1\rvert,\lvert \bar{e}_1 \rvert)=\tilde{e}_1$, $\lvert e_2\rvert\leq\tilde{e}_2$,
and $\underline{\dot{e}}_2\leq\lvert \dot{e}_2\rvert\leq\bar{\dot{e}}_2$. Furthermore, denote bounds in healthy conditions, when $\Delta u_{i-1}$=0, with a superscript $0$.

If, elementwise, $\text{diag}(M)>\lvert A_{21}\rvert \tilde{e}_1+\lvert A_{22}\rvert\bar{\zeta}_i+\lvert E_2\rvert\bar{\eta}+\lvert F_2\rvert\bar{\Delta}$, then the following conditions hold
\begin{enumerate}
    \item $\bar{e}_1=\bar{e}_1^0-r_\Delta(\Delta u_{i-1})$
    \item $\underline{e}_1=\underline{e}_1^0-r_\Delta(\Delta u_{i-1})$
    \item $\bar{e}_1^0=e^{A_{11}t}e_1(0) -A_{11}^{-1} (I-e^{A_{11}t})(\lvert A_{12}\rvert\bar{\zeta}_i+\lvert E_1\rvert\bar{\eta})$
    \item $\underline{e}_1^0=e^{A_{11}t}e_1(0) +A_{11}^{-1} (I-e^{A_{11}t})(\lvert A_{12}\rvert\bar{\zeta}_i+\lvert E_1\rvert\bar{\eta})$
    \item $\tilde{e}_2=\bar{\zeta}_i$
    \item $\bar{\dot{e}}^0_2= \lvert A_{21}\rvert \bar{e}^0_1+\lvert A_{22}^{-s}\rvert\bar{\zeta}_i+\lvert E_2\rvert\bar{\eta}+\lvert A_{22}^s \rvert \bar{e}_2+ M$
    \item $\underline{\dot{e}}^0_2=\lvert A_{21}\rvert \underline{e}^0_1-\lvert A_{22}^{-s}\rvert\bar{\zeta}_i-\lvert E_2\rvert\bar{\eta}-\lvert A_{22}^s \rvert \bar{e}_2+ M$
    \item $\bar{\dot{e}}_2=\bar{\dot{e}}^0_2+\text{sgn}(e_y)r(\Delta u_{i-1})$
    \item $\underline{\dot{e}}_2=\underline{\dot{e}}^0_2+\text{sgn}(e_y)r(\Delta u_{i-1})$
    \item $\text{sgn}(\dot{e}_2)=-\text{sgn}(e_y)$
\end{enumerate}
where $r_\Delta(\Delta u_{i-1})=\int_{0}^{t} F_1 \Delta u_{i-1}(s) e^{A_{11}(t-s)}\mbox{d}s$ and $r(\Delta u_{i-1})=A_{21}r_\Delta(\Delta u_{i-1})+F_2\Delta u_{i-1}$.
\end{lemma}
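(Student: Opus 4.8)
The plan is to handle the two blocks of the error system \eqref{eq:SMO_err_dyn} separately: the $e_1$ block is a decoupled stable linear system driven by noise, uncertainty and attack, whereas the $e_2$ block is forced onto the sliding surface $e_y=0$ by the switching injection $\nu$.

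First I would treat items 1--4. Since $A_{11}\prec0$ (which holds by Assumption~\ref{ass:zeros} and the transformation of~\cite{Edwards2000-vs}), the first row of~\eqref{eq:SMO_err_dyn} is the stable linear ODE $\dot e_1 = A_{11}e_1 + A_{12}\zeta_i - E_1\eta - F_1\Delta u_{i-1}$, with variation-of-constants solution $e_1(t) = e^{A_{11}t}e_1(0) + \int_0^t e^{A_{11}(t-s)}\big(A_{12}\zeta_i(s) - E_1\eta(s) - F_1\Delta u_{i-1}(s)\big)\,\mathrm{d}s$. Splitting off the exact attack contribution $-r_\Delta(\Delta u_{i-1})$ from the remaining, healthy, trajectory gives items 1--2 immediately. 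For the healthy part I would bound the integrand elementwise with $|\zeta_i|\le\bar\zeta_i$ and $|\eta|\le\bar\eta$ (Assumptions~\ref{ass:noise}--\ref{ass:uncertainty}) and use $\int_0^t e^{A_{11}\sigma}\,\mathrm{d}\sigma = -A_{11}^{-1}(I-e^{A_{11}t})$, together with entrywise nonnegativity of $e^{A_{11}t}$ and of $-A_{11}^{-1}(I-e^{A_{11}t})$ in the intersection-control realisation, to obtain items 3--4; the only subtlety here is that these are elementwise, not norm, bounds.

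Next I would establish the sliding motion, which I expect to be the \emph{main obstacle}. Writing $e_2 = e_y + \zeta_i$, the second row of~\eqref{eq:SMO_err_dyn} becomes $\dot e_2 = A_{21}e_1 + A_{22}^s e_y + A_{22}\zeta_i - E_2\eta - F_2\Delta u_{i-1} - M\,\mathrm{sgn}(e_y)$, whose non-switching part has magnitude at most $|A_{21}|\tilde e_1 + |A_{22}|\bar\zeta_i + |E_2|\bar\eta + |F_2|\bar\Delta$ by items 1--4 and Assumptions~\ref{ass:noise}--\ref{ass:uncertainty}. Under the stated gain condition the switching term then dominates, so a standard finite-time reachability argument on $\tfrac12 e_y^\top e_y$, applied element by element, yields finite-time convergence to, and invariance of, $e_y=0$; hence $e_2 = \zeta_i$, giving $\tilde e_2 = \bar e_2 = \bar\zeta_i$ (item 5), and on the surface the switching term also sets $\mathrm{sgn}(\dot e_2) = \mathrm{sgn}(\nu) = -\mathrm{sgn}(e_y)$ (item 10). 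The delicate points are that $\dot e_y$ formally contains $\dot\zeta_i$, which I would absorb by treating $\zeta_i$ as a matched disturbance as in~\cite{Edwards2000-vs} (or argue via a boundary layer shrinking to the noise level), and that the gain condition is phrased with $|A_{22}|$, which is exactly what the $e_y$-substituted form above requires on the surface.

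Finally, for items 6--9 I would use item 10 to write $|\dot e_2| = -\mathrm{sgn}(e_y)\,\dot e_2$ and substitute the second row of~\eqref{eq:SMO_err_dyn}. In healthy conditions ($\Delta u_{i-1}=0$), bounding each term with $|\nu| = M$, $|e_2|\le\bar e_2$, $|\zeta_i|\le\bar\zeta_i$, $|\eta|\le\bar\eta$ and the $e_1$-bounds $\bar e_1^0,\underline e_1^0$ gives the upper and lower estimates of items 6--7 (the lower one via the reverse triangle inequality, the switching term being dominant). For the attacked case, $e_1 = e_1^0 - r_\Delta(\Delta u_{i-1})$ shows the attack enters $\dot e_2$ both directly through $-F_2\Delta u_{i-1}$ and through $-A_{21}r_\Delta(\Delta u_{i-1})$, i.e.\ as the single extra term $-r(\Delta u_{i-1})$; since the gain condition already includes $|F_2|\bar\Delta$, $\mathrm{sgn}(\dot e_2)$ is unchanged, so $|\dot e_2| = |\dot e_2^0| + \mathrm{sgn}(e_y)\,r(\Delta u_{i-1})$, and replacing $|\dot e_2^0|$ by the bounds of items 6--7 yields items 8--9.
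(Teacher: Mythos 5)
Your argument is correct and follows essentially the same route as the paper: items 1), 2), 8) and 9) are obtained exactly as in the appendix, via variation of constants on the $e_1$ row, splitting off $r_\Delta(\Delta u_{i-1})$, and writing $\dot e_2=\dot e_2^0-A_{21}r_\Delta(\Delta u_{i-1})-F_2\Delta u_{i-1}$ with the sign of $\dot e_2$ preserved under the gain condition so that $\bar{\dot e}_2=\bar{\dot e}_2^0+\text{sgn}(e_y)r(\Delta u_{i-1})$. The remaining items 3)--7) and 10), which the paper does not reprove but cites from \cite{Keijzer2019}, you rederive with the standard elementwise-bounding and finite-time reachability/sliding arguments, consistent with that reference.
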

\begin{proof}
Statements 3)-7) and 10) are proven \cite{Keijzer2019}. Statements 1), 2), 8) and 9) are proven in the appendix.
\end{proof}
\section{Detection Logic Design}
\label{sec:detection}
\noindent In this section, the novel cyber-attack detection method will be described. This method directly analyses the behaviour of observer error $e_2$, and uses this to detect cyber-attacks. For comparison, the EOI based detection method presented in previous work \cite{Keijzer2019} is presented in subsection \ref{subsec:EOI_detect}. The detection performance of the two methods in a collaborative IC scenario will be compared in section \ref{sec:sim}.

The novel proposed detection logic uses thresholds on the observer error $e_2$ based on the bounds in Lemma \ref{prop:bounds}, and the relation $e_y=e_2-\zeta_i$. The resulting thresholds, $\underline{e}_2\leq e_2 \leq \bar{e}_2$, will be used for cyber-attack detection. Preferably one would directly monitor this condition, and detect a cyber-attack when it is violated. However, as $e_2$ is not known to the observer, this is not possible. Alternatively, the condition
\begin{equation}\label{eq:detect_cond}
    \underline{e}_2> \bar{e}_2
\end{equation} 
can be monitored. Satisfying this condition implies violation of $\underline{e}_2\leq e_2 \leq \bar{e}_2$, and can thus serve as detection condition.

\subsection{Design of Error-bounds}
\noindent At all time, $\lvert e_2\rvert < \bar{\zeta}_i$ (Lemma ~\ref{prop:bounds}.5), and when a new measurement arrives to the observer $e_y-\bar{\zeta}_i\leq e_2\leq e_y+\bar{\zeta}_i$. Denote the sequence of measurement times as $\{t_m\}$, which do not need to be equidistant. Then, for any time $t_m$

\begin{equation}\label{eq:error_bounds1}
    \begin{aligned}
\bar{e}_2(t_m) =& \min(e_y(t_m)+\bar{\zeta}_i,\bar{\zeta}_i)\\
\underline{e}_2(t_m) =& \max(e_y(t_m)-\bar{\zeta}_i,-\bar{\zeta}_i)\\
\end{aligned}
\end{equation}
Furthermore, bounds on $\dot{e}_2^0$ are known from Lemma~\ref{prop:bounds}. With these, $e_2$ can be bound during each period $[t_{m-1}~t_{m}]$ as
\begin{equation}\label{eq:error_bounds2}
\begin{aligned}
\text{If   } &e_y(t_{m-1}) <0 \\
&\bar{e}_2(t) = \int_{t_{m-1}}^{t}\bar{\dot{e}}_2^0(T)\text{d}T; \underline{e}_2(t) = \int_{t_{m-1}}^{t}\underline{\dot{e}}_2^0(T)\text{d}T\,.\\
\text{If   } &e_y(t_{m-1}) >0 \\
&\bar{e}_2(t) =- \int_{t_{m-1}}^{t}\underline{\dot{e}}_2^0(T)\text{d}T
; \underline{e}_2(t) =- \int_{t_{m-1}}^{t}\bar{\dot{e}}_2^0(T)\text{d}T\,.
\end{aligned}
\end{equation}
The above bounds require further inspection. At first sight they seem to depend only on the modeled healthy system behaviour through $\bar{\dot{e}}_2^0$ and $\underline{\dot{e}}_2^0$. However, the bounds also depend on the real behaviour through $e_y$. The integration duration is dictated by the sign of $e_y$.
The two bounds in equations \eqref{eq:error_bounds1} and \eqref{eq:error_bounds2} can be combined for $m\geq1$ as
\begin{equation*}
\begin{aligned}
\text{If   }e_y(t_{m-1})&<0\\[-0.2cm]
\bar{e}_2(t_m) =& \min(\int_{t_{m-1}}^{t_m}\bar{\dot{e}}_2^0(T)\text{d}T,e_y(t_m)+\bar{\zeta}_i,\bar{\zeta}_i)\\%[-0.1cm]
\underline{e}_2(t_m) =& \max(\int_{t_{m-1}}^{t_m}\underline{\dot{e}}_2^0(T)\text{d}T,e_y(t_m)-\bar{\zeta}_i,-\bar{\zeta}_i)\\[-0.1cm]
\end{aligned}
\end{equation*}
\begin{equation*}
\begin{aligned}
\text{If   } e_y(t_{m-1})&>0\\[-0.2cm]
\bar{e}_2(t_m) =& \min(-\int_{t_{m-1}}^{t_m}\underline{\dot{e}}_2^0(T)\text{d}T,e_y(t_m)+\bar{\zeta}_i,\bar{\zeta}_i)\\%[-0.1cm]
\underline{e}_2(t_m) =& \max(-\int_{t_{m-1}}^{t_m}\bar{\dot{e}}_2^0(T)\text{d}T,e_y(t_m)-\bar{\zeta}_i,-\bar{\zeta}_i)\\
\end{aligned}
\end{equation*}
Equation \eqref{eq:error_bounds1} can be used to obtain $\bar{e}_2(t_0)$ and $\underline{e}_2(t_0)$. Based on these bounds, the detection criterion \eqref{eq:detect_cond} can be monitored at every measurement time $t_m$.

\subsection{Detectability Analysis}
\noindent In this section, conditions are presented for which the proposed novel detection method can detect an attack. Furthermore, it is proven that in healthy conditions, the approach will never cause a detection.

First, introduce an assumption which is required in the presented proofs. This assumption is a relaxation of the matching condition commonly used in SMO literature.\cite{Keijzer2019}
\begin{assumption}\label{ass:rank}
$(F_2-A_{21}A_{11}^\dagger F_1)$ is full column rank.
\end{assumption}
First, it will be proven that no detection occurs in healthy conditions.
\begin{theorem}
Consider system \eqref{eq:SMO_general_model}, observer \eqref{eq:SMO} and detection criterion \eqref{eq:detect_cond}. In healthy conditions, i.e. if $\Delta u_{i-1} = 0 ~ \forall t$, the detection criterion will never be satisfied.
\end{theorem}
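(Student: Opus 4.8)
The plan is to reduce the theorem to the statement that, in healthy conditions, the bounds $\underline{e}_2$ and $\bar{e}_2$ designed in this section are \emph{valid}, i.e.\ that $\underline{e}_2(t_m)\le e_2(t_m)\le\bar{e}_2(t_m)$ holds componentwise at every measurement time $t_m$. Granting this, the theorem is immediate: any sandwich $\underline{e}_2\le e_2\le\bar{e}_2$ forces $\underline{e}_2\le\bar{e}_2$ in each component, which is exactly the negation of the detection criterion \eqref{eq:detect_cond}; hence \eqref{eq:detect_cond} can never hold and no detection is raised. Throughout I assume the switching-gain condition of Lemma~\ref{prop:bounds} to be in force, so that Lemma~\ref{prop:bounds} is available.

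To establish validity I would proceed by induction on the measurement index $m$, exploiting that $\bar{e}_2(t_m)$ is a $\min$ and $\underline{e}_2(t_m)$ a $\max$ of a few candidate expressions; it then suffices to check that each candidate is, on its own, an upper (resp.\ lower) bound for $e_2(t_m)$. The algebraic candidates are handled directly and also settle the base case $m=0$ through \eqref{eq:error_bounds1}: $\lvert e_2(t_m)\rvert\le\bar{\zeta}_i$ is statement 5) of Lemma~\ref{prop:bounds}, and the identity $e_y = e_2-\zeta_i$ together with Assumption~\ref{ass:noise} yields $e_y(t_m)-\bar{\zeta}_i\le e_2(t_m)\le e_y(t_m)+\bar{\zeta}_i$. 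For the inductive step the only remaining candidates are the integral terms in \eqref{eq:error_bounds2}: writing $e_2(t_m)=e_2(t_{m-1})+\int_{t_{m-1}}^{t_m}\dot{e}_2(T)\,\text{d}T$ and invoking statements 6)--9) of Lemma~\ref{prop:bounds} --- which in healthy conditions collapse to $\underline{\dot{e}}_2^0\le\dot{e}_2\le\bar{\dot{e}}_2^0$ since $r(\Delta u_{i-1})=0$ when $\Delta u_{i-1}=0$ --- brackets the \emph{increment} of $e_2$ over $[t_{m-1},t_m]$, while statement 10), $\text{sgn}(\dot{e}_2)=-\text{sgn}(e_y)$, fixes the direction of integration (this is why the two cases in \eqref{eq:error_bounds2} are split according to the sign of $e_y(t_{m-1})$) and makes $e_2$ monotone on any interval on which $\text{sgn}(e_y)$ is constant.

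I expect the crux of the argument to be precisely these integral candidates: justifying that they may be ``initialised at $0$'' rather than at $e_2(t_{m-1})$. A single interval $[t_{m-1},t_m]$ does not by itself give $e_2(t_{m-1})\le 0$ (or $\ge 0$); the argument has to be chained back to the most recent measurement time at which $\text{sgn}(e_y)$ changed, using statement 10) of Lemma~\ref{prop:bounds} to conclude that $e_2$ has been monotone ever since, and statement 5) (or the fresh-measurement bound \eqref{eq:error_bounds1}) at that time to anchor the running integral, so that the sign of $e_y(t_{m-1})$ consistently selects the correct, conservative integral bound; the degenerate case $e_y=0$ needs a brief separate remark. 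Once validity of the bounds is in place the theorem follows with no further computation; note in particular that Assumption~\ref{ass:rank} is not used here --- only the switching-gain condition of Lemma~\ref{prop:bounds} is.
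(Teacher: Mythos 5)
Your reduction of the theorem to the claim that the designed thresholds are \emph{valid} envelopes, i.e.\ that $\underline{e}_2(t_m)\le e_2(t_m)\le\bar{e}_2(t_m)$ holds in healthy conditions, is where the argument breaks down, and it is not the route the paper takes. The integral candidates in \eqref{eq:error_bounds2} are initialised at zero at $t_{m-1}$, not at (a bound on) $e_2(t_{m-1})$, and $e_y(t_{m-1})<0$ only gives $e_2(t_{m-1})<\zeta_i(t_{m-1})\le\bar{\zeta}_i$, not $e_2(t_{m-1})\le 0$. Your proposed repair --- chaining back to the most recent zero crossing $t^*$ of $e_y$ --- anchors the running integral at $e_2(t^*)=\zeta_i(t^*)$, whose magnitude can be as large as $\bar{\zeta}_i$, and this offset is simply absent from the candidate $\int_{t_{m-1}}^{t_m}\bar{\dot{e}}_2^0\,\text{d}T$. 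Nothing in Lemma~\ref{prop:bounds} excludes, say, $e_2(t_{m-1})$ close to $\bar{\zeta}_i$ with $\zeta_i(t_{m-1})=\bar{\zeta}_i$ (so $e_y(t_{m-1})<0$) and a short measurement interval: then the integral candidate is of order $t_m-t_{m-1}$ while $e_2(t_m)\approx e_2(t_{m-1})$, so the $\min$ in the combined bound falls below the true $e_2(t_m)$ and the sandwich fails. The thresholds are therefore not guaranteed to contain the true error, and the inductive step of your validity argument cannot be justified from the available statements; the theorem does not actually assert (or need) validity, only that the ordering $\bar{e}_2\ge\underline{e}_2$ is never violated.

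The paper proves exactly that weaker statement directly. It works on the sequence $\{t_{s_i}\}$ of sign changes of $e_y$, writes the exact increment of $e_2$ across a pair of sign changes as $c_i=t_i^-\dot{e}_2^--t_i^+\dot{e}_2^+$ in terms of the averaged rates $\dot{e}_2^\pm$, bounds $\sum_j c_{i+j}$ using Lemma~\ref{prop:bounds}.5 and $e_y=e_2-\zeta_i$, eliminates the unknown switching times via $t_i^+/t_i^-$, and observes that in healthy conditions $\underline{\dot{e}}_2^0\le\dot{e}_2^\pm\le\bar{\dot{e}}_2^0$, so the accumulated expressions \eqref{eq:simplified_int_bound_full}--\eqref{eq:simplified_int_bound_full-} yield \eqref{eq:e2changebound} and hence a non-decreasing gap $\bar{e}_2-\underline{e}_2$ while \eqref{eq:error_bounds2} is active; the re-initialisation through \eqref{eq:error_bounds1} is then handled separately (both bounds reset gives $\bar{e}_2-\underline{e}_2=2\bar{\zeta}_i-\lvert e_y\rvert\ge0$, a one-sided reset is closed with \eqref{eq:e2changebound}). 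Your side remarks are fine --- the negation of \eqref{eq:detect_cond} is indeed all that must be preserved, and Assumption~\ref{ass:rank} is not needed for this theorem --- but the central premise of your plan asserts a property the detector does not have, so the proof as proposed does not go through.
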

\begin{proof}
Define the sequence $\{t_{s_i}\}$ as the times where $e_y$ changes sign, $\dot{e}_2^+$ as the average $\lvert\dot{e}_2\rvert$ while $e_y>0$, and $\dot{e}_2^-$ as the average $\lvert\dot{e}_2\rvert$ while $e_y<0$. Furthermore, without loss of generality, assume $e_y$ becomes positive at every $t_{s_{2i}}$ allowing to write $t^+_i=t_{s_{2i+1}}-t_{s_{2i}}$ and $t^-_i=t_{s_{2i+2}}-t_{s_{2i+1}}$.

Then, denote for the true dynamics of $e_2$ as
\begin{equation}\label{eq:e2_simple}
    \begin{aligned}
        e_2(t_{s_{2i+2}})=e_2(t_{s_{2i}})+c_i\,,\\
        e_2(t_{s_{2i+2N}})=e_2(t_{s_{2i}})+\sum_{j=0}^N c_{i+j} ~ \forall N\in\mathbb{Z}\,,
    \end{aligned}
\end{equation}
where $c_i=t^-_i\dot{e}_2^--t^+_i\dot{e}_2^+$. Now $c_i$ can be bounded, using the bounds on $e_2$ from lemma~\ref{prop:bounds} and $e_y=e_2-\zeta_i$, as
\begin{equation*}
\begin{aligned}
    -e_2(t_{s_{2i}})&+\max(e_y(t_{s_{2i}})-\bar{\zeta}_{i},-\bar{\zeta}_{i}) \leq \sum_{j=i}^{N} c_i \\&\leq -e_2(t_{s_{2i}})+\min(e_y(t_{s_{2i}})+\bar{\zeta}_{i},+\bar{\zeta}_{i}) ~ \forall N\in\mathbb{Z}\,.
\end{aligned}
\end{equation*} 
Furthermore, from equation \eqref{eq:e2_simple}, it can be derived that $\frac{t^+_i}{t^-_i}=\frac{\dot{e}_2^-}{\dot{e}_2^+}+\frac{c_i}{t^-_i\dot{e}_2^+}$. With this, $\bar{e}_2$ in equation \eqref{eq:error_bounds2} can be rewritten as \begin{equation*}
    \bar{e}_2(t_{s_{2i+2}})=\bar{e}_2(t_{s_{2i}})+\frac{t_i^-}{\dot{e}_2^+}(\bar{\dot{e}}_2^0\dot{e}_2^--\underline{\dot{e}}_2^0\dot{e}_2^+)+\frac{\bar{\dot{e}}_2^0}{\dot{e}_2^+}c_{i}\,.
\end{equation*}
which can be extended for $\bar{e}_2(t_{s_{2i+2N}})$ as
\begin{equation}\label{eq:simplified_int_bound_full}
\resizebox{0.89\hsize}{!}{\hspace{-0.3cm}$
    \bar{e}_2(t_{s_{2i+2N}})=\bar{e}_2(t_{s_{2i}})+
    \sum_{j=0}^{N-1}\left(\frac{t_{i+j}^-}{\dot{e}_2^+}(\bar{\dot{e}}_2^0\dot{e}_2^--\underline{\dot{e}}_2^0\dot{e}_2^+)+\frac{\bar{\dot{e}}_2^0}{\dot{e}_2^+}c_{i+j}\right),$}
\end{equation}
for any $N\in\mathbb{Z}$. Similarly for $\underline{e}_2(t_{s_{2i+2N}})$ we can derive
\begin{equation}\label{eq:simplified_int_bound_full-}
\resizebox{0.89\hsize}{!}{\hspace{-0.3cm}$
    \underline{e}_2(t_{s_{2i+2N}})=\underline{e}_2(t_{s_{2i}})+
    \sum_{j=0}^{N-1}\left(\frac{t_{i+j}^-}{\dot{e}_2^+}(\underline{\dot{e}}_2^0\dot{e}_2^--\bar{\dot{e}}_2^0\dot{e}_2^+)+\frac{\underline{\dot{e}}_2^0}{\dot{e}_2^+}c_{i+j}\right).$}
\end{equation}
It can be seen that in healthy conditions, when $\underline{\dot{e}}_2^0\leq\dot{e}_2^-\leq\bar{\dot{e}}_2^0$ and $\underline{\dot{e}}_2^0\leq\dot{e}_2^+\leq\bar{\dot{e}}_2^0$,
\begin{equation}\label{eq:e2changebound}
\begin{aligned}
    \bar{e}_2(t_{s_{2i+2N}})-\bar{e}_2(t_{s_{2i}})&\geq\sum_{j=0}^{N-1}\frac{\bar{\dot{e}}_2^0}{\dot{e}_2^+}c_{i+j}\geq\sum_{j=0}^{N-1}c_{i+j}\\
    \underline{e}_2(t_{s_{2i+2N}})-\underline{e}_2(t_{s_{2i}})&\leq\sum_{j=0}^{N-1}\frac{\underline{\dot{e}}_2^0}{\dot{e}_2^+}c_{i+j}\leq\sum_{j=0}^{N-1}c_{i+j}\\
\end{aligned}
\end{equation}
By subtracting these inequalities it can be found that $\bar{e}_2(t_{s_{2i+2N}})-\underline{e}_2(t_{s_{2i+2N}})\geq \bar{e}_2(t_{s_{2i}})-\underline{e}_2(t_{s_{2i}})$, i.e. considering the behaviour in equation \eqref{eq:error_bounds2}, the difference between $\bar{e}_2$ and $\underline{e}_2$ is non-decreasing. This only leaves to prove that no detection occurs if the bounds are affected by equation \eqref{eq:error_bounds1}.

If both bounds are affected by equation \eqref{eq:error_bounds1}, $\bar{e}_2-\underline{e}_2=2\bar{\zeta}_i-\lvert e_y \rvert\geq 0$. If only the lower bound is affected, use equation \eqref{eq:e2changebound} to derive
\begin{equation*}
\begin{aligned}
    \bar{e}_2(t_{s_{2i+2N}})&\geq\bar{e}_2(t_{s_{2i}})-e_2(t_{s_i})+\max(e_y(t_{s_{2i}})-\bar{\zeta}_{i},-\bar{\zeta}_{i})\\&\geq\max(e_y(t_{s_{2i}})-\bar{\zeta}_{i},-\bar{\zeta}_{i})\geq \underline{e}_2(t_{s_{2i+2N}})
\end{aligned}
\end{equation*}
This proves the theorem.
\end{proof}
Then two lemmas are introduced to support the proof of theorem \ref{thm:detect}, where sufficient conditions for attack detection are presented.
\begin{lemma}\label{lem:r_attack}
consider $r(\Delta u_{i-1})$ as defined in Lemma \ref{prop:bounds}.
Then the following statements can be proven
\begin{enumerate}
\item $r(\Delta u_{i-1})=0~\forall t$ if $\Delta u_{i-1}=0 ~\forall t$, i.e. healthy conditions.
\item There always exists $\gamma>0$ such that within finite time $\lvert r(\Delta u_{i-1})-(F_2-A_{21}A_{11}^\dagger F_1)\Delta u_{i-1}\rvert\leq\gamma\,.$
\end{enumerate}
\end{lemma}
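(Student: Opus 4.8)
The plan is to treat the two statements separately. Statement 1) is immediate: putting $\Delta u_{i-1}\equiv 0$ into the definitions of Lemma~\ref{prop:bounds} gives $r_\Delta(\Delta u_{i-1})=\int_0^t e^{A_{11}(t-s)}F_1\cdot 0\,\mathrm{d}s=0$ for all $t$, hence $r(\Delta u_{i-1})=A_{21}\cdot 0+F_2\cdot 0=0$; nothing beyond the definitions is used. The content is in statement 2), which I would prove as a stable‑convolution estimate.

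First, since $A_{11}\prec 0$ (ensured by Assumption~\ref{ass:zeros}), $A_{11}$ is nonsingular, so $A_{11}^\dagger=A_{11}^{-1}$ and $A_{11}A_{11}^{-1}=I$. Substituting the definitions of $r$ and $r_\Delta$ and cancelling the two $F_2\Delta u_{i-1}$ terms, one finds $r(\Delta u_{i-1})-(F_2-A_{21}A_{11}^\dagger F_1)\Delta u_{i-1}=A_{21}\,w(t)$ with $w(t)\triangleq\int_0^t e^{A_{11}(t-s)}F_1\Delta u_{i-1}(s)\,\mathrm{d}s+A_{11}^{-1}F_1\Delta u_{i-1}(t)$, so the claim is equivalent to a uniform bound on $w$. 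Because $A_{11}$ is Hurwitz there are $c,\lambda>0$ with $\|e^{A_{11}\sigma}\|\le c\,e^{-\lambda\sigma}$; combined with $\bar\Delta\ge\lvert\Delta u_{i-1}\rvert$ (Assumption~\ref{ass:uncertainty}) this gives $\lvert w(t)\rvert\le\bar\Delta\lvert F_1\rvert c/\lambda+\bar\Delta\lvert A_{11}^{-1}F_1\rvert$ for every $t\ge0$, so $\gamma\triangleq\lvert A_{21}\rvert(\bar\Delta\lvert F_1\rvert c/\lambda+\bar\Delta\lvert A_{11}^{-1}F_1\rvert)$ already establishes the bound, in particular within finite time.

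To give the ``within finite time'' phrasing real meaning -- an exponentially decaying transient leaving a residual governed only by the rate of change of the attack -- I would integrate $w$ by parts (equivalently use $\dot w=A_{11}w+A_{11}^{-1}F_1\dot\Delta u_{i-1}$ with $w(0)=A_{11}^{-1}F_1\Delta u_{i-1}(0)$) to obtain $w(t)=e^{A_{11}t}A_{11}^{-1}F_1\Delta u_{i-1}(0)+\int_0^t e^{A_{11}(t-s)}A_{11}^{-1}F_1\dot\Delta u_{i-1}(s)\,\mathrm{d}s$; the first term is $O(e^{-\lambda t})$, hence below any prescribed level after a finite $T$, and the second is bounded by $\lvert A_{11}^{-1}F_1\rvert(c/\lambda)\sup_s\lvert\dot\Delta u_{i-1}(s)\rvert$. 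The main obstacle is precisely that the excerpt bounds $\Delta u_{i-1}$ but not $\dot\Delta u_{i-1}$, so this sharp version needs a mild extra regularity assumption on the attack (otherwise one falls back on the crude bound above); everything else is routine matrix and convolution manipulation together with the Hurwitz property of $A_{11}$. With this lemma in hand, Theorem~\ref{thm:detect} can then invoke Assumption~\ref{ass:rank} to translate largeness of $(F_2-A_{21}A_{11}^\dagger F_1)\Delta u_{i-1}$ into largeness of $\Delta u_{i-1}$ for the detectability conclusion.
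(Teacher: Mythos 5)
Your proof is correct, and it takes a genuinely different (and more quantitative) route than the paper. The paper disposes of statement 2) by restricting attention to a \emph{constant} $\Delta u_{i-1}$, noting $\lim_{t\to\infty} r(\Delta u_{i-1})=(F_2-A_{21}A_{11}^\dagger F_1)\Delta u_{i-1}$, and then invoking smoothness to conclude the bound holds after finite time; statement 1) is, as in your proposal, just substitution. You instead write the discrepancy explicitly as $r(\Delta u_{i-1})-(F_2-A_{21}A_{11}^\dagger F_1)\Delta u_{i-1}=A_{21}w(t)$ (using that $A_{11}$ Hurwitz implies $A_{11}^\dagger=A_{11}^{-1}$, which the paper leaves implicit) and bound $w$ by a stable-convolution estimate. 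Two remarks on how the versions relate. First, your crude uniform bound does satisfy the lemma as literally worded (``there exists $\gamma>0$''), but that reading is nearly vacuous and would not serve Theorem~\ref{thm:detect}, where one needs $r$ to actually come close to $(F_2-A_{21}A_{11}^\dagger F_1)\Delta u_{i-1}$ so that Assumption~\ref{ass:rank} can push $\lvert r\rvert$ above $\delta$; the operative content is the small-$\gamma$ version. Second, your integration-by-parts refinement delivers exactly that: for the constant-attack case the paper actually argues ($\dot\Delta u_{i-1}=0$), your residual term vanishes, the transient is $O(e^{-\lambda t})$, and the bound holds below \emph{any} prescribed $\gamma$ after finite time — recovering the paper's conclusion with an explicit rate, and extending it to slowly varying attacks at the price of the extra regularity assumption on $\dot\Delta u_{i-1}$ that you correctly flag as not provided by the paper's assumptions. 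So your argument subsumes the paper's proof and makes precise what the paper's appeal to smoothness glosses over; the paper's version is shorter but only covers the constant-attack case and leaves $\gamma$ unquantified.
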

\begin{proof}
By substituting $\Delta u_{i-1}=0 ~\forall t$ in the function for $r(\Delta u_{i-1})$, it can directly be seen that $r(\Delta u_{i-1})=0~\forall t$. This proves statement a). For a constant $\Delta u_{i-1}$, $$\lim_{t\to\infty} r(\Delta u_{i-1})=(F_2-A_{21}A_{11}^\dagger F_1)\Delta u_{i-1}\,.$$ As $r(\Delta u_{i-1})$ is a smooth function, this means that within finite time $\lvert r(\Delta u_{i-1})- (F_2-A_{21}A_{11}^\dagger F_1)\Delta u_{i-1}\rvert<\gamma$ 
\end{proof}
\begin{lemma}\label{lem:detect}
Consider the behaviours of $\bar{e}_2$ and $\underline{e}_2$ from equation \eqref{eq:simplified_int_bound_full} and \eqref{eq:simplified_int_bound_full-}. Assume there exist $\epsilon^+>0$ and $\epsilon^->0$ such that 
\begin{itemize}
    \item $\bar{\dot{e}}_2^0<\dot{e}_2^+-\epsilon^-$ and $\underline{\dot{e}}_2^0>\dot{e}_2^-+\epsilon^-$ for the period $\left[t_{s_{2i}}~ t_{s_{2i+2N}}\right]$.
    \item OR $\underline{\dot{e}}_2^0>\dot{e}_2^++\epsilon^+$ and $\bar{\dot{e}}_2^0<\dot{e}_2^--\epsilon^+$ for the period $\left[t_{s_{2i}}~ t_{s_{2i+2N}}\right]$.
\end{itemize}
Then, there exists an $\epsilon$ such that $\bar{e}_2(t_{s_{2i+2N}})<\underline{e}_2(t_{s_{2i+2N}})$, if $N>\frac{4\bar{\zeta}_i}{\phi\epsilon}$, where $\phi\leq \frac{t^-_{i+j}}{\dot{e}_2^+} ~\forall i,j$.
\end{lemma}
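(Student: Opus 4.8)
The plan is to treat the two detectability conditions of the statement one at a time and, in each case, to exhibit a sign-change index $N$ at which one of the two thresholds, propagated through the integral branch \eqref{eq:error_bounds2} as summarised by \eqref{eq:simplified_int_bound_full}--\eqref{eq:simplified_int_bound_full-}, has drifted monotonically past the constant clamp ($-\bar\zeta_i$, respectively $+\bar\zeta_i$) that \eqref{eq:error_bounds1} imposes on the other threshold at every measurement time. For the first condition I work with $\bar e_2$ via \eqref{eq:simplified_int_bound_full}; the second condition is the mirror image and is handled with $\underline e_2$ via \eqref{eq:simplified_int_bound_full-}, so I describe only the first in detail.

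First, split the accumulated increment $\bar e_2(t_{s_{2i+2N}})-\bar e_2(t_{s_{2i}})$ in \eqref{eq:simplified_int_bound_full} into a model--truth mismatch sum $\sum_{j=0}^{N-1}\frac{t^-_{i+j}}{\dot e_2^+}\left(\bar{\dot e}_2^0\dot e_2^--\underline{\dot e}_2^0\dot e_2^+\right)$ and a true-drift sum $\sum_{j=0}^{N-1}\frac{\bar{\dot e}_2^0}{\dot e_2^+}c_{i+j}$. Multiplying the two hypotheses $\bar{\dot e}_2^0<\dot e_2^+-\epsilon^-$ and $\underline{\dot e}_2^0>\dot e_2^-+\epsilon^-$ by the nonnegative quantities $\dot e_2^-$ and $\dot e_2^+$ and adding yields $\bar{\dot e}_2^0\dot e_2^--\underline{\dot e}_2^0\dot e_2^+<-\epsilon^-(\dot e_2^-+\dot e_2^+)<0$; combined with $\frac{t^-_{i+j}}{\dot e_2^+}\ge\phi$, each term of the mismatch sum is then at most $-\phi\epsilon$ for a suitable constant $\epsilon>0$ assembled from $\epsilon^-$ and a positive lower bound of $\dot e_2^-+\dot e_2^+$ over the window, so the mismatch sum is at most $-N\phi\epsilon$. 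For the true-drift sum the key observation is that the $c_{i+j}$ telescope: by \eqref{eq:e2_simple}, $\sum_{j=0}^{N-1}c_{i+j}=e_2(t_{s_{2i+2N}})-e_2(t_{s_{2i}})$, which is bounded in magnitude by $2\bar\zeta_i$ uniformly in $N$ by Lemma~\ref{prop:bounds}.5; since the rate bounds $\bar{\dot e}_2^0,\underline{\dot e}_2^0$ settle to constants once the $e_1$-transient in Lemma~\ref{prop:bounds}.3--4 has decayed, the coefficient $\bar{\dot e}_2^0/\dot e_2^+$ factors out of the sum and is below $1$ by the first hypothesis, so the true-drift sum contributes at most $2\bar\zeta_i$.

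Adding these two estimates to $\bar e_2(t_{s_{2i}})\le\bar\zeta_i$ (from \eqref{eq:error_bounds1}) gives $\bar e_2(t_{s_{2i+2N}})\le 3\bar\zeta_i-N\phi\epsilon$, which is strictly below $-\bar\zeta_i$ whenever $N>\tfrac{4\bar\zeta_i}{\phi\epsilon}$. Because the value actually returned by the min in the combined bound never exceeds the accumulation tracked by \eqref{eq:simplified_int_bound_full}, while \eqref{eq:error_bounds1} guarantees $\underline e_2\ge-\bar\zeta_i$ at every measurement time, we conclude $\bar e_2(t_{s_{2i+2N}})<-\bar\zeta_i\le\underline e_2(t_{s_{2i+2N}})$, i.e. $\bar e_2<\underline e_2$. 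Under the second condition the mismatch term of \eqref{eq:simplified_int_bound_full-} is made at least $+\phi\epsilon$ per cycle by the hypotheses, so $\underline e_2$ drifts above $+\bar\zeta_i$ while $\bar e_2\le\bar\zeta_i$ stays clamped; one simply lets $\epsilon$ additionally absorb the ratio $\underline{\dot e}_2^0/\dot e_2^+$ now multiplying the $c$-terms, so that the same threshold $N>\tfrac{4\bar\zeta_i}{\phi\epsilon}$ is recovered.

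The step I expect to be the main obstacle is the control of the true-drift terms $c_{i+j}$: a single cycle's drift can be made arbitrarily large in either sign by the attacker, so the estimate cannot proceed term by term and must instead rely on the telescoped bound $\big|\sum_{j}c_{i+j}\big|\le 2\bar\zeta_i$, which in turn needs the cycle-independence (after the transient) of the rate constants so that the $c$-coefficient can be pulled outside the sum. A secondary technicality, paralleling the final paragraph of the proof of Theorem~1, is to verify that the min/max clamps in the combined bound only add the already-budgeted $\pm\bar\zeta_i$ slack and thus cannot undo the crossing established for the integral branch.
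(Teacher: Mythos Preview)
Your proposal is correct and follows essentially the same route as the paper: derive $\bar{\dot e}_2^{\,0}\dot e_2^{-}-\underline{\dot e}_2^{\,0}\dot e_2^{+}<-\epsilon$ from the first pair of hypotheses, use $\phi\le t^-_{i+j}/\dot e_2^{+}$ to make the mismatch sum in \eqref{eq:simplified_int_bound_full} contribute at most $-N\phi\epsilon$, bound the $c$-terms via the telescoping identity and Lemma~\ref{prop:bounds}.5, and compare the resulting $\bar e_2(t_{s_{2i+2N}})$ with the $-\bar\zeta_i$ clamp on $\underline e_2$; the second bullet is handled by the mirror argument on \eqref{eq:simplified_int_bound_full-}. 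The only cosmetic difference is bookkeeping: the paper keeps the sharper $e_y$-dependent clamps $\min(e_y+\bar\zeta_i,\bar\zeta_i)$ and $\max(e_y-\bar\zeta_i,-\bar\zeta_i)$ from \eqref{eq:error_bounds1} throughout and then simplifies to $4\bar\zeta_i$ at the end, whereas you replace them by $\pm\bar\zeta_i$ up front and arrive at the same threshold $N>4\bar\zeta_i/(\phi\epsilon)$; your explicit remark that $\bar{\dot e}_2^{\,0}/\dot e_2^{+}<1$ lets the $c$-sum be bounded by $2\bar\zeta_i$ is in fact a point the paper glosses over.
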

\begin{proof}
First, use $\bar{\dot{e}}_2^0<\dot{e}_2^+-\epsilon^-$ and $\underline{\dot{e}}_2^0>\dot{e}_2^-+\epsilon^-$ to derive
\begin{equation*}
    \bar{\dot{e}}_2^0\dot{e}_2^--\underline{\dot{e}}_2^0\dot{e}_2^+<-(\dot{e}_2^-+\dot{e}_2^+)\epsilon^-<-\epsilon\,.
\end{equation*}
Then substitute $\bar{\dot{e}}_2^0\dot{e}_2^--\underline{\dot{e}}_2^0\dot{e}_2^+<-\epsilon$ and $\phi\leq \frac{t_{i+1}^-}{\dot{e}_2^+} ~\forall i,j$ in equation \eqref{eq:simplified_int_bound_full} giving
\begin{equation*}
    \bar{e}_2(t_{s_{2i+2N}})-\bar{e}_2(t_{s_{2i}})<
    -N\epsilon\phi+\sum_{j=0}^{N-1}c_{i+j}\,.
\end{equation*}
Using the bound on $c_{i+j}$ gives
\begin{equation*}
\begin{aligned}
    &\bar{e}_2(t_{s_{2i+2N}})<\bar{e}_2(t_{s_{2i}})
    -N\epsilon\phi-e_2(t_{s_{2i}})+\min(e_y(t_{s_{2i}})+\bar{\zeta}_i,\bar{\zeta}_i)\\
    &<-N\epsilon\phi+2\min(e_y(t_{s_{2i}})+\bar{\zeta}_i,\bar{\zeta}_i)-\max(e_y(t_{s_{2i}})-\bar{\zeta}_i,-\bar{\zeta}_i)
\end{aligned}
\end{equation*}
Meanwhile, always $\underbar{e}_2(t_{s_{2i+2N}})>\max(e_y(t_{s_{2i}})-\bar{\zeta}_i,-\bar{\zeta}_i)$, which with some simplification leads to
\begin{equation*}
\begin{aligned}
    \bar{e}_2(t_{s_{2i+2N}})-\underline{e}_2(t_{s_{2i+2N}})<-N\epsilon\phi+4\bar{\zeta}_i
\end{aligned}
\end{equation*}
So, $\bar{e}_2(t_{s_{2i+2N}})<\underline{e}_2(t_{s_{2i+2N}})$ if $N>\frac{4\bar{\zeta}_i}{\epsilon\phi}$.

The same result can be obtained by using $\underline{\dot{e}}_2^0>\dot{e}_2^++\epsilon^+$ and $\bar{\dot{e}}_2^0<\dot{e}_2^--\epsilon^+$ to obtain $\underline{\dot{e}}_2^0\dot{e}_2^--\bar{\dot{e}}_2^0\dot{e}_2^+>\epsilon$ and substituting in equation \eqref{eq:simplified_int_bound_full-}.
\end{proof}

\begin{theorem}\label{thm:detect}
Consider system \eqref{eq:SMO_general_model}, with observer \eqref{eq:SMO} and detection criterion \eqref{eq:detect_cond}. There exist a $\delta$ and $\tau$ such that the detection condition \eqref{eq:detect_cond} will be satisfied if $\lvert r(\Delta u_{i-1})\rvert\geq \delta$ for at least a duration $\tau$. Furthermore, if assumption \ref{ass:rank} holds, there always exists a $\Delta u_{i-1}$ such that $\lvert r(\Delta u_{i-1}) \rvert\geq \delta$.
\end{theorem}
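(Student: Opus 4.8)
\emph{Proof strategy.} The plan is to reduce the first claim to Lemma~\ref{lem:detect} by showing that a large enough $\lvert r(\Delta u_{i-1})\rvert$ forces one of that lemma's two hypotheses. Starting from Lemma~\ref{prop:bounds}.8--10, the attacked derivative bounds are the healthy ones shifted by $\mathrm{sgn}(e_y)\,r(\Delta u_{i-1})$, while $e_2$ still moves opposite to $e_y$. Averaging these pointwise bounds over the intervals on which $e_y$ has a fixed sign --- and taking $\bar{\dot{e}}_2^0,\underline{\dot{e}}_2^0$ at their post-transient worst-case values, which are finite because $A_{11}\prec0$ --- gives $\underline{\dot{e}}_2^0+r\le\dot{e}_2^+\le\bar{\dot{e}}_2^0+r$ on the intervals where $e_y>0$ and $\underline{\dot{e}}_2^0-r\le\dot{e}_2^-\le\bar{\dot{e}}_2^0-r$ on those where $e_y<0$, writing $r=r(\Delta u_{i-1})$. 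I would then set $\delta$ equal to the width $\bar{\dot{e}}_2^0-\underline{\dot{e}}_2^0$ of the healthy derivative band plus a strictly positive margin $\epsilon^-$; a one-line computation shows that $r\ge\delta$ gives $\bar{\dot{e}}_2^0<\dot{e}_2^+-\epsilon^-$ and $\underline{\dot{e}}_2^0>\dot{e}_2^-+\epsilon^-$, i.e. the first bullet of Lemma~\ref{lem:detect}, while $r\le-\delta$ gives the second. Lemma~\ref{lem:detect} then delivers $\bar{e}_2(t_{s_{2i+2N}})<\underline{e}_2(t_{s_{2i+2N}})$, i.e. the detection criterion \eqref{eq:detect_cond}, once $N$ sign-change pairs of $e_y$ have elapsed.

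What remains for the first claim is to convert ``$N$ sign-change pairs'' into a time $\tau$. For this I would invoke Lemma~\ref{prop:bounds}.5 and .10: between consecutive sign changes of $e_y$ the error $e_2$ is monotone and confined to $[-\bar{\zeta}_i,\bar{\zeta}_i]$, so each half-period covers a displacement of at most $2\bar{\zeta}_i$. Since the switching-gain condition of Lemma~\ref{prop:bounds} keeps $\nu$ dominant, $\lvert\dot{e}_2\rvert$ admits a uniform positive lower bound and a finite upper bound that both survive the presence of the attack because $\Delta u_{i-1}$ is bounded (Assumption~\ref{ass:uncertainty}); together with a minimum spacing of the measurement times $\{t_m\}$, this bounds the half-periods above and below, so $N$ sign-change pairs occur within some finite $\tau=\tau(N)$ and the quantity $\phi$ in Lemma~\ref{lem:detect} is well defined and positive. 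As the hypotheses of Lemma~\ref{lem:detect} only have to hold over those $N$ pairs, it suffices that $\lvert r(\Delta u_{i-1})\rvert\ge\delta$ holds for the duration $\tau$, which is the first claim. I expect this step --- obtaining the uniform positive lower bound on $\lvert\dot{e}_2\rvert$ under attack, hence the upper bound on the half-periods --- to be the main obstacle, since it requires tracking the constants in Lemma~\ref{prop:bounds} and checking that the switching-gain condition still dominates once the attack term is present.

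For the second claim I would take $\Delta u_{i-1}$ constant. By Lemma~\ref{lem:r_attack}.b, $r(\Delta u_{i-1})$ reaches a $\gamma$-neighbourhood of $(F_2-A_{21}A_{11}^\dagger F_1)\Delta u_{i-1}$ in finite time. Under Assumption~\ref{ass:rank} the column $(F_2-A_{21}A_{11}^\dagger F_1)$ is nonzero, so $\lvert(F_2-A_{21}A_{11}^\dagger F_1)\Delta u_{i-1}\rvert=\sigma\lvert\Delta u_{i-1}\rvert$ with $\sigma>0$ its smallest singular value; choosing the constant level of the attack so that $\sigma\lvert\Delta u_{i-1}\rvert-\gamma\ge\delta$ and holding it fixed yields $\lvert r(\Delta u_{i-1})\rvert\ge\delta$ for arbitrarily long, in particular for the duration $\tau$, which completes the proof. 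If in addition the attack must respect the bound $\bar{\Delta}$ of Assumption~\ref{ass:uncertainty}, this is possible exactly when $\bar{\Delta}\ge(\delta+\gamma)/\sigma$; the structural content of Assumption~\ref{ass:rank} is precisely that no attack direction is asymptotically invisible to $r$.
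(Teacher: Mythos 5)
Your proposal follows essentially the same route as the paper's proof: it reduces the first claim to Lemma~\ref{lem:detect} via the shifted derivative bounds of Lemma~\ref{prop:bounds}.8--9 (the paper uses equation~\eqref{eq:e2dot_attacked} in the same way), picks $\delta$ as the healthy derivative-band width $\bar{\dot{e}}_2^0-\underline{\dot{e}}_2^0$ plus a margin, and derives the second claim from Lemma~\ref{lem:r_attack} together with Assumption~\ref{ass:rank} for a constant attack. Your additional bookkeeping --- bounding the half-periods of $e_y$ to convert $N$ sign-change pairs into a concrete duration $\tau$, and relating the required attack level to $\bar{\Delta}$ --- only makes explicit steps the paper leaves implicit, so the argument is correct and not a genuinely different approach.
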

\begin{proof}
In Lemma \ref{lem:detect} conditions on $\dot{e}_2^+$ and $\dot{e}_2^-$ are presented such that detection occurs within a duration $\tau=t_{s_{2i+2N}}-t_{s_{2i}}$. Here $N$ is defined in Lemma \ref{lem:detect}. In this proof it thus remains to be shown that there exists a $\delta$ such that the conditions on $\dot{e}_2^+$ and $\dot{e}_2^-$ from Lemma \ref{lem:detect} hold for any attack $r(\Delta u_{i-1})\geq \delta$.

From equation \eqref{eq:e2dot_attacked}, use $\dot{e}_2^+=\dot{e}_2^0+r(\Delta u_{i-1})$ and $\dot{e}_2^-=\dot{e}_2^0-r(\Delta u_{i-1})$. If $r(\Delta u_{i-1})>\bar{\dot{e}}_2^0-\underline{\dot{e}}_2^0+\epsilon^-$, then $\dot{e}_2^+>\dot{e}_2^0+\bar{\dot{e}}_2^0-\underline{\dot{e}}_2^0+\epsilon^-\geq\bar{\dot{e}}_2^0+\epsilon^-$ and $\dot{e}_2^-<\dot{e}_2^0-\bar{\dot{e}}_2^0+\underline{\dot{e}}_2^0-\epsilon^-\leq \underline{\dot{e}}_2^0-\epsilon^-$. This is equivalent to the first condition in Lemma \ref{lem:detect}. The second condition holds if $r(\Delta u_{i-1})<\underline{\dot{e}}_2^0-\bar{\dot{e}}_2^0-\epsilon^+$ and can be proven similarly.

Furthermore, in Lemma \ref{lem:r_attack} it is shown that there exists a $\gamma>0$ such that $\lvert r(\Delta u_{i-1})-(F_2-A_{21}A_{11}^\dagger F_1)\Delta u_{i-1}\rvert\leq\gamma$ within finite time. Therefore, if assumption \ref{ass:rank} holds, there always exists a $\Delta u_{i-1}$ to obtain $\lvert r(\Delta u_{i-1})\rvert \geq\delta$.
\end{proof}

\subsection{Equivalent Output Injection based detection}\label{subsec:EOI_detect}
\noindent In this subsection the equivalent output injection (EOI) based detection, as previously introduced in \cite{Keijzer2019}, is presented for comparison with the novel detection method. First the EOI is defined as
\begin{equation*}
    \dot{\nu}_\text{fil} = K(\nu-\nu_\text{fil})\,,
\end{equation*}
where $\nu_\text{fil}$ is the EOI, and $K\succ0$ is a diagonal gain matrix.

The EOI was originally introduced to estimate the cyber-attack. In \cite{Keijzer2019} the following was proven.
\begin{proposition}
Consider noiseless system \eqref{eq:SMO_general_model}, where $\zeta_i=0$, and the SMO \eqref{eq:SMO}. If $\text{diag}(M)>\lvert A_{21}\rvert \tilde{e}_1+\lvert A_{22}\rvert\bar{\zeta}_i+\lvert E_2\rvert\bar{\eta}+\lvert F_2\rvert\bar{\Delta}$, then $e_2\to 0$ and $\dot{e}_2\to 0$. Furthermore, assuming a constant cyber-attack, the cyber-attack estimate $$\hat{\Delta} u_{i-1}=(F_2-A_{21}A_{11}^{-1}F_1)^\dagger \nu_\text{fil}$$ has an accuracy of $$\lvert\Delta u_{i-1}-\hat{\Delta} u_{i-1}\rvert \leq\lvert(F_2-A_{21}A_{11}^{-1}F_1)^\dagger( A_{21}A_{11}^{-1}\lvert E_1\rvert+\lvert E_2\rvert) \bar{\eta}\rvert$$\hfill $\blacksquare$
\end{proposition}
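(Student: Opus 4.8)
The plan is to split the argument into two parts that match the two claims: (i) finite-time attainment of the sliding surface $e_2=0$, which immediately gives $e_2\to0$ and $\dot e_2\to0$ in the averaged (Filippov) sense, and (ii) identification of the equivalent output injection on the sliding surface, which is then low-pass filtered into $\nu_\text{fil}$ and inverted to yield the estimate together with its uncertainty-induced error bound.

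For part (i) I would first record that $e_1$ is bounded: since $A_{11}\prec0$ and, by Assumption~\ref{ass:uncertainty}, $\eta$ and $\Delta u_{i-1}$ are bounded, the $e_1$-subsystem of \eqref{eq:SMO_err_dyn} is input-to-state stable, so $\lvert e_1\rvert\le\tilde e_1$ with $\tilde e_1$ as in Lemma~\ref{prop:bounds}. Next I would take the componentwise quadratic candidate $V=\tfrac12 e_y^\top e_y$ (recall $e_y=e_2$ since $\zeta_i=0$) and differentiate along \eqref{eq:SMO_err_dyn}: the term $e_2^\top A_{22}^s e_2\le0$ because $A_{22}^s\prec0$, and substituting $\nu=-M\,\mathrm{sgn}(e_y)$ and bounding the remaining terms gives $\dot V\le -\big(\mathrm{diag}(M)-\lvert A_{21}\rvert\tilde e_1-\lvert A_{22}\rvert\bar\zeta_i-\lvert E_2\rvert\bar\eta-\lvert F_2\rvert\bar\Delta\big)^\top\lvert e_y\rvert$. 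The hypothesis on $\mathrm{diag}(M)$ makes the parenthesised vector strictly positive elementwise, so $\dot V\le-\rho\lVert e_y\rVert_1$ for some $\rho>0$; this is the $\eta$-reachability condition, and the standard sliding-mode argument then yields that $e_y$, hence $e_2$, reaches $0$ in finite time and stays there, so $\dot e_2=0$ on the surface.

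For part (ii), once sliding is established I would use the equivalent-injection concept: setting $e_2=\dot e_2=0$ in the second row of \eqref{eq:SMO_err_dyn} (with $\zeta_i=0$) gives $\nu_{eq}=-A_{21}e_1+E_2\eta+F_2\Delta u_{i-1}$, while the reduced dynamics $\dot e_1=A_{11}e_1-E_1\eta-F_1\Delta u_{i-1}$ converge exponentially, for a constant attack and (quasi-)constant $\eta$, to $e_1^\infty=A_{11}^{-1}(E_1\eta+F_1\Delta u_{i-1})$ (well defined since $A_{11}\prec0$). Hence $\nu_{eq}\to(F_2-A_{21}A_{11}^{-1}F_1)\Delta u_{i-1}+(E_2-A_{21}A_{11}^{-1}E_1)\eta$. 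The filter $\dot\nu_\text{fil}=K(\nu-\nu_\text{fil})$, $K\succ0$, makes $\nu_\text{fil}$ track this averaged value, and left-multiplying by $(F_2-A_{21}A_{11}^{-1}F_1)^\dagger$ — a genuine left inverse because this matrix has full column rank — gives $\hat\Delta u_{i-1}\to\Delta u_{i-1}+(F_2-A_{21}A_{11}^{-1}F_1)^\dagger(E_2-A_{21}A_{11}^{-1}E_1)\eta$. Taking norms, applying the componentwise triangle inequality to $E_2-A_{21}A_{11}^{-1}E_1$, and using $\lvert\eta\rvert\le\bar\eta$ produces the stated accuracy bound.

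The main obstacle is making the equivalent-injection step rigorous: $\nu$ is a discontinuous switching signal, and one must justify that its low-pass-filtered version $\nu_\text{fil}$ approaches $\nu_{eq}$ rather than merely chattering — this is the classical time-scale-separation/averaging argument of sliding-mode theory, with the residual filtering error made negligible relative to the $\bar\eta$-term by appropriate choice of $K$. A secondary, routine point is carrying the absolute values through the final inequality so the bound matches the claimed form exactly, and noting that the convergence in part (ii) is asymptotic, matching the ``accuracy'' wording of the statement.
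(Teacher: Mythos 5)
Your proposal is essentially correct, and it follows the standard route by which this result is established: the paper itself gives no proof of this Proposition (it is quoted from \cite{Keijzer2019}), and your two-step argument --- finite-time reachability of the sliding surface $e_2=e_y=0$ from the elementwise gain condition on $\mathrm{diag}(M)$ (with $e_1$ bounded independently of $\nu$ since $A_{11}\prec0$ and $\eta,\Delta u_{i-1}$ are bounded), followed by the equivalent-output-injection computation $\nu_{eq}=-A_{21}e_1+E_2\eta+F_2\Delta u_{i-1}$ with $e_1\to A_{11}^{-1}(E_1\eta+F_1\Delta u_{i-1})$ and left-inversion through the full-column-rank matrix $(F_2-A_{21}A_{11}^{-1}F_1)$ --- is exactly the classical SMO analysis underlying that reference. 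The only remaining points are the ones you already flag as routine: the low-pass-filter/averaging justification that $\nu_\text{fil}$ tracks $\nu_{eq}$ (with a residual made negligible by the choice of $K$), and the elementwise absolute-value bookkeeping needed to match the stated accuracy bound.
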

The EOI can also be used for cyber-attack detection. Based on the bounds presented in Lemma \ref{prop:bounds}, a threshold for EOI-based cyber-attack detection is introduced in \cite{Keijzer2019}, globally bounding the healthy EOI behaviour. In the threshold, each element $_{(i)}$ is defined as
\begin{equation*}
    \bar{\nu}_{\text{fil},(i)} = e^{-k\bar{t}^{0,*}_{(i)}}\bar{U}_{(i)} + (1-e^{-k\bar{t}^{0,*}_{(i)}})m\,,
\end{equation*}
where $k=K_{(i,i)}$, $m=M_{(i,i)}$, $\bar{t}^{0,*}=\lim_{t\to\infty}\frac{2\bar{e}_{2}}{\underline{\dot{e}}_{2}^0}$, and $\bar{U} =\lim_{t\to\infty} \lvert A_{12}\rvert \bar{e}_1^{0}+\lvert A_{22}^{-s}\rvert\bar{\zeta}_i+\lvert E_2\rvert\bar{\eta}+\lvert A_{22}^s \rvert \bar{e}_2$. A lower threshold $\underline{\nu}_\text{fil}=-\bar{\nu}_\text{fil}$ is derived similarly. A cyber-attack is detected if the condition $\underline{\nu}_\text{fil}\leq\nu_\text{fil}\leq\bar{\nu}_\text{fil}$ is violated.
\section{Simulation of Intersection Control}
\label{sec:sim}
\noindent A simulation is performed with 2 cars approaching an intersection. The car closest to the intersection will be referred to as the leader car, for which the input sequence is pre-defined. The car furthest from the intersection is the follower car, and is controlled using the control law from \cite{Ploeg2011a} shown below. The detection algorithm works regardless of the control law.
\begin{equation*}
    \dot{u}_i = -\frac{1}{h}(u_i+k_p\epsilon_1+k_d\dot{\epsilon}_1-u_{i-1})\,.
\end{equation*}
Here $\epsilon_1=d_i-d_{i,r}$, $d_i=p_i-p_{i-1}-L_{i-1}$ is the relative distance of the cars to the intersection, and $d_{i,r} = r+hv_i$ is the desired relative distance. The time headway $h$, standstill distance $r$, and control gains $k_p$ and $k_d$, are defined in Table~\ref{tab:sim_param}.
\begin{table}[h]
    \centering
    \caption{Parameters used in simulation}
    \label{tab:sim_param}
    \begin{tabular}{c|c||c|c}
        Parameter & Value & Parameter & Value \\\hline
        $p_0(0)$ & $-40\,[m]$ & $p_1(0)$&$-50\,[m]$\\
        $v_0(0)$ & $8\,\left[\frac{m}{s}\right]$ & $v_1(0)$ & $10\,\left[\frac{m}{s}\right]$\\
        $a_0(0)$ & $0\,\left[\frac{m}{s^2}\right]$& $a_1(0)$&$0\,\left[\frac{m}{s^2}\right]$\\\hline
        $\tau_0$ &$0.11\,[s]$ & $\tau_1$& $0.1\,[s]$\\
        $L_1$ & $4\,[m]$ &$r_{\tau}$& $0.9\,[-]$\\\hline
        $h$ & $0.7\,[s]$ & $r$& $1.5\,[m]$\\
        $k_p$ & $0.2\,[s^{-2}]$ &$k_d$ & $0.7\,[s^{-1}]$\\\hline
        $\bar{\Delta}$ & $10\,\left[\frac{m}{s^2}\right]$& $\bar{\eta}$& $1\,\left[\frac{m}{s^2}\right]$\\
        $\bar{\zeta}_1$ & \multicolumn{3}{c}{$[0.15~0.3~0.03~0.15]^\top\,\left[m~\frac{m}{s}~\frac{m}{s}~\frac{m}{s^2}\right]$}\\\hline
        $K$ & $I_4\,[s^{-1}]$& $A_{22}^s$& $-0.1\cdot I_4\,[s^{-1}]$\\
        $M$ & \multicolumn{3}{c}{$\text{diag}([0.5~11.5~0.2~2.0])\,\left[m~\frac{m}{s}~\frac{m}{s}~\frac{m}{s^2}\right]$}\\
    \end{tabular}
\end{table}

Based on \cite{Abdulla2016}, IC is initiated when the cars are within $50 [m]$ from the intersection. Furthermore, the intersection is approached at $8 [m/s]\approx 30 [km/h]$, which is a common standard speed in urban areas.
In figure \ref{fig:SimScenario}, the input of the leader car, and the cyber-attack considered are show. It can be seen that the lead vehicle drives at a constant speed, and at $t = 0.5 [s]$ a step attack is performed on the communication.
In figure \ref{fig:SimPos} it is shown that this attack causes the cars to gradually drive closer together, eventually leading to a crash at the intersection. The crash occurs when the follower vehicle enters the intersection, i.e. $p_f=-4~[m]$. At this point the lead vehicle has not yet left the intersection, i.e. $-4~[m]<p_l<0~[m]$, resulting in a crash.
\begin{figure}[h]
	\centering
	\includegraphics[width=0.8\columnwidth, trim={0cm 0cm 7cm 20.5cm},clip]{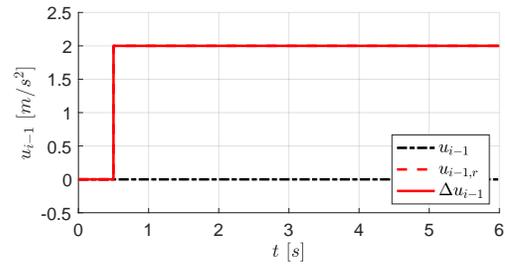}
	\caption{Input of leader car and considered attack on the communication.}
	\label{fig:SimScenario}
\end{figure}
\begin{figure}[h]
	\centering
	\includegraphics[width=0.8\columnwidth, trim={0cm 0cm 7cm 20.6cm},clip]{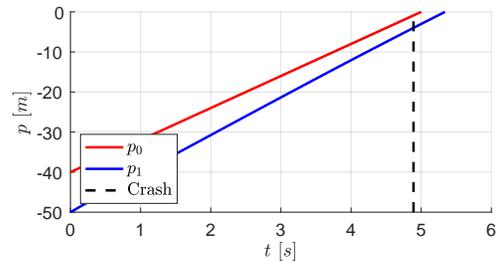}
	\caption{Effect of the attack on vehicle positions, leading up to the crash.}
	\label{fig:SimPos}
\end{figure}
\begin{figure}[h]
	\centering
	\includegraphics[width=0.8\columnwidth, trim={0cm 0cm 7cm 18.3cm},clip]{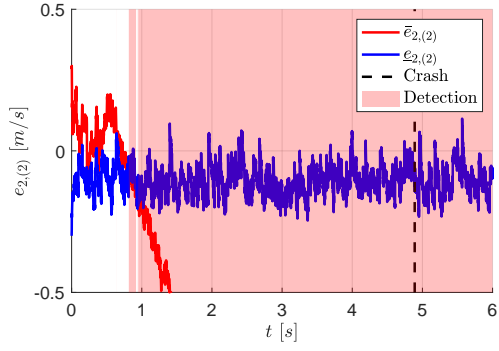}
	\caption{Error bounds and detection performance of novel detection logic.}
	\label{fig:Sim_ebound}
\end{figure}
\begin{figure}[h]
	\centering
	\includegraphics[width=0.8\columnwidth, trim={0cm 0cm 7cm 18.4cm},clip]{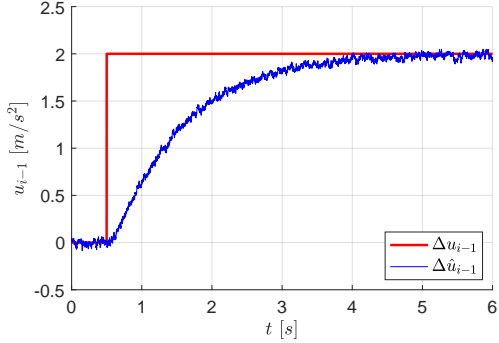}
	\caption{Attack estimation by EOI based estimation method.}
	\label{fig:Sim_nuest}
\end{figure}
\begin{figure}[h]
	\centering
	\includegraphics[width=0.8\columnwidth, trim={0cm 0cm 7cm 18.3cm},clip]{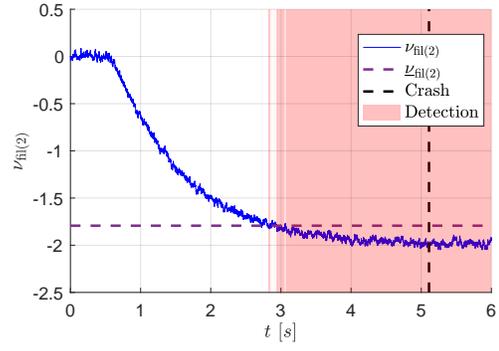}
	\caption{Attack detection by EOI based detection method.}
	\label{fig:Sim_nudetect}
\end{figure}
Figure \ref{fig:Sim_ebound} shows the detection results obtained with the novel detection logic presented in Section \ref{sec:detection}. Detection first occurs for a very short period at $t=0.64~[s]$. This is not visible in the figure. More consistent detection occurs at $t=0.82~[s]$. This consistent detection occurs well before the crash occurs at $t=4.8~[s]$.

In Figure \ref{fig:Sim_nudetect}, it is shown that the EOI based detection method from previous work also detects the attack. However, the detection only occurs at $t=2.73~[s]$. This is significantly slower than the detection with the novel detection method.

The capability of the EOI based method to also estimate the attack is illustrated in figure \ref{fig:Sim_nuest}. This estimate of the attack can be very useful in designing an effective control strategy to deal with the attack.

As both detection methods depend on the same observer, it is feasible to implement both methods concurrently. In this way the best properties of both methods can be combined.
\section{Conclusions}
\label{sec:conclusion}
\noindent Intersections, in the way they are currently designed for human drivers, are a major cause of accidents as well as traffic delays. Therefore, automated control of vehicles in intersections offers great potential for improvement. However, as Intersection Control systems do rely on wireless V2X networks for traffic coordination, security of such communication channel is paramount for safety. In V2X networks a first line of preventive security measures are already in place at the protocol level to make cyber-attacks more difficult. Still the possibility of an inside attacker, or one capable of overcoming preventive security measures cannot be ruled out. To protect against such a scenario, a second line of defenses based on a cyber-attack detection method is proposed in this paper.
In particular, a novel detector based on a Sliding Mode Observer and a corresponding set of thresholds was designed. With respect to previous results, the novel detection approach is shown to be faster and more sensitive, as the filtering of the observer Equivalent Output Injection is avoided. Theoretical results certifying the robustness and detectability of the proposed approach were provided, as well as a simulation study.

In the future, a two-stage approach based on a fault detector and a fault identification scheme may be envisaged, thus paving the way for fully autonomous accommodation of faults and cyber-attacks in Intersection Control systems based on cooperative autonomous vehicles. Furthermore, adaptations to the detection method are envisioned for which boundedness of the attack is no longer required.

\appendix
\noindent From the first row of equation \eqref{eq:SMO_err_dyn}, using lemma 1.1.1 in \cite{Lakshmikantham2015-il} we obtain 
$e_1(t) = e^{A_{11}(t)}e_1(0) -r_\zeta(\zeta_i) -r_\eta(\eta)-r_\Delta(\Delta u_{i-1})$
, where $r_\zeta(\zeta_i) = \int_{0}^{t} A_{21} \zeta_i(s) e^{A_{11}(t-s)}\mbox{d}s$ and $r_\eta(\eta) = \int_{0}^{t} E_1 \eta(s) e^{A_{11}(t-s)}\mbox{d}s$.

From this it can be concluded that $e_1(t)=e_1^0(t)-r_\Delta(\Delta u_{i-1})$, and therefore
$    \bar{e}_1(t)=\bar{e}_1^0(t)-r_\Delta(\Delta u_{i-1})$ and  $\underline{e}_1(t)=\underline{e}_1^0(t)-r_\Delta(\Delta u_{i-1})$
, which proves statements 1) and 2) in Lemma \ref{prop:bounds}.

Furthermore $\dot{e}_2$ from the second row of equation \eqref{eq:SMO_err_dyn} is
$    \dot{e}_2 = A_{21}e_1+A_{22}^s e_2 +A_{22
    }^{-s}\zeta_i-E_2\eta-F_2\Delta u_{i-1} - M\text{sgn}(e_y)$ .
With this we can write, 
\begin{equation}\label{eq:e2dot_attacked}
\resizebox{0.89\hsize}{!}{\hspace{-0.4cm}$
        \dot{e}_2 = \dot{e}_2^0 + A_{21}(e_1-e_1^0)-F_2\Delta u_{i-1} =
        \dot{e}_2^0 -A_{21}r_\Delta(\Delta u_{i-1})-F_2\Delta u_{i-1}.$}
\end{equation}
To prove statements 8) and 9) in Lemma \ref{prop:bounds} we finally observe that
 $   \bar{\dot{e}}_2 = \bar{\dot{e}}_2^0 +\text{sgn}(e_y)r(\Delta u_{i-1})$ and
    $\underline{\dot{e}}_2 = \underline{\dot{e}}_2^0 +\text{sgn}(e_y)r(\Delta u_{i-1})$.

\begin{spacing}{0.85}
\bibliographystyle{IEEEtran}
\bibliography{references}
\end{spacing}
\end{document}